  \theoremstyle{plain}
  \newtheorem{theorem}{Theorem}
  \newtheorem{lemma}{Lemma}  
  \newtheorem{fact}{Fact}
  \newtheorem{observation}{Observation}
  \theoremstyle{definition}
  \newtheorem{example}{Example}
\title{Two strings at Hamming distance 1 cannot be both quasiperiodic}
\author[1]{Amihood Amir}
\author[2]{Costas S. Iliopoulos}
\author[2,3]{Jakub Radoszewski\footnote{The author is a Newton International Fellow.}$^{,}$}
\affil[1]{
  Department of Computer Science, Bar-Ilan University, Ramat-Gan 52900, Israel\\
  \texttt{amir@cs.biu.ac.il}
}
\affil[2]{
  Department of Informatics, King's College London, London, UK\\
    \texttt{costas.iliopoulos@kcl.ac.uk}
}
\affil[3]{
Institute of Informatics, University of Warsaw, Warsaw, Poland\\
\texttt{jrad@mimuw.edu.pl}
}
\date{\vspace{-5ex}}
\begin{document}
\maketitle

  \begin{abstract}
    We present a generalization of a known fact from combinatorics on words related to periodicity 
    into quasiperiodicity.
    A string is called \emph{periodic} if it has a period which is at most half of its length.
    A string $w$ is called \emph{quasiperiodic} if it has a non-trivial \emph{cover},
    that is, there exists a string $c$ that is shorter than $w$ and such that every position in $w$
    is inside one of the occurrences of $c$ in $w$.
    It is a folklore fact that two strings that differ at exactly one position
    cannot be both periodic.
    Here we prove a more general fact that two strings that differ at exactly one position
    cannot be both quasiperiodic.
    Along the way we obtain new insights into combinatorics of quasiperiodicities.
  \end{abstract}


  \section{Introduction}
  A \emph{string} is a finite sequence of letters over an alphabet $\Sigma$.
  If $w$ is a string, then by $|w|=n$ we denote its length, by $w[i]$ for $i \in \{1,\ldots,n\}$ we denote
  its $i$-th letter, and by $w[i..j]$ we denote a \emph{factor} of $w$ being a string composed
  of the letters $w[i] \ldots w[j]$ (if $i>j$, then it is the empty string).
  A factor $w[i..j]$ is called a \emph{prefix} if $i=1$ and a \emph{suffix} if $j=n$.
  
  An integer $p$ is called a \emph{period} of $w$ if $w[i]=w[i+p]$ for all $i=1,\ldots,n-p$.
  A string $u$ is called a \emph{border} of $w$ if it is both a prefix and a suffix of $w$.
  It is a fundamental fact of string periodicity that a string $w$ has a period $p$ if and only if it has a border of length $n-p$;
  see~\cite{AlgorithmsOnStrings,Lothaire}.
  If $p$ is a period of $w$, $w[1..p]$ is called a \emph{string period} of $w$.
  If $w$ has a period $p$ such that $p \le \frac{n}{2}$, then $w$ is called \emph{periodic}.
  In this case $w$ has a border of length at least $\lceil \frac{n}{2} \rceil$.

  For two strings $w$ and $w'$ of the same length $n$, we write $w =_j w'$ if $w[i]=w'[i]$ for all $i \in \{1,\ldots,n\}\setminus \{j\}$
  and $w[j] \ne w'[j]$.
  This means that $w$ and $w'$ are at Hamming distance 1, where the Hamming distance counts the number of different
  positions of two equal-length strings.
  The following fact states a folklore property of string periodicity that we generalize in this work into string quasiperiodicity.
  For completeness we provide its proof in Section~\ref{sec:main}.

  \begin{fact}\label{fct:periodic}
    Let $w$ and $w'$ be two strings of length $n$ and $j \in \{1,\ldots,n\}$ be an index.
    If $w =_j w'$, then at most one of the strings $w$, $w'$ is periodic.
  \end{fact}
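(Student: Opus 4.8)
The plan is to prove the (equivalent) statement that $w$ and $w'$ cannot \emph{both} be periodic. So assume they are, and fix a period $p\le n/2$ of $w$ and a period $q\le n/2$ of $w'$; I aim for a contradiction coming from the single mismatch at $j$. A couple of reductions first. Since $p\le n/2$, some position $k\ne j$ with $|k-j|=p$ lies in $\{1,\dots,n\}$, and $w[k]=w[j]$; as $k\ne j$ this gives $w'[k]=w[k]=w[j]\ne w'[j]$, while any two positions in $\{1,\dots,n\}$ congruent modulo $q$ carry equal letters of $w'$, so $k\not\equiv j\pmod q$, i.e.\ $q\nmid p$. The symmetric argument (a position at distance $q$ from $j$ carrying $w'[j]$, read inside $w$) gives $p\nmid q$, so in particular $p\ne q$. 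Swapping $w$ and $w'$ if necessary I may assume $p<q$, and reversing both strings if necessary I may assume $j\le\lceil n/2\rceil$; this last assumption secures $j+q\le n$ and $j\le n-q$.

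The core is a propagation argument. Encode $w$ by residues modulo $p$: for $c\in\{1,\dots,n\}$ let $\langle c\rangle\in\{1,\dots,p\}$ be the residue representative of $c$, and put $u_{\langle c\rangle}=w[c]$, so that $w[i]=u_{\langle i\rangle}$ for every $i$ (this is just the period $p$ of $w$). The period of $w'$ gives $w'[j]=w'[j+q]=w[j+q]=u_{\langle j+q\rangle}$, hence
\[
u_{\langle j+q\rangle}=w'[j]\ne w[j]=u_{\langle j\rangle}.
\]
Because $\gcd(p,q)\mid q$, the residues $\langle j\rangle$ and $\langle j+q\rangle$ lie in one coset of the subgroup $\langle\gcd(p,q)\rangle$ of $\mathbb{Z}/p$, and $q$ generates that subgroup. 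I propagate the letter $w[j]$ around this coset using that, whenever $i\in\{1,\dots,n-q\}$ with $i\ne j$ and $i+q\ne j$, the period of $w'$ yields
\[
u_{\langle i\rangle}=w[i]=w'[i]=w'[i+q]=w[i+q]=u_{\langle i+q\rangle}.
\]
Taking a witness $i$ successively in the residue classes $\langle j-q\rangle,\langle j-2q\rangle,\dots$ and chaining, after $p/\gcd(p,q)-1$ steps — a number that is at least $1$ since $p\nmid q$ — one reaches the class $\langle j-(p/\gcd(p,q)-1)q\rangle=\langle j+q\rangle$, so $u_{\langle j\rangle}=u_{\langle j-q\rangle}=\dots=u_{\langle j+q\rangle}$, i.e.\ $u_{\langle j+q\rangle}=w[j]$. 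This contradicts the displayed inequality.

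The hard part I expect is legitimizing the witnesses: showing that each prescribed residue class modulo $p$ really contains a position $i$ with $i,\,i+q\in\{1,\dots,n\}$ and $i,\,i+q\ne j$. The one danger is that the class $\langle j\rangle$ may meet $\{1,\dots,n\}$ only at $j$ itself — precisely the regime $\operatorname{lcm}(p,q)>n$ — but every intermediate class used is different from $\langle j\rangle$, so $i\ne j$ is free, and for every step after the first $i+q\ne j$ is free as well. For the first step one must additionally avoid $i=j-q$ when $j\ge q+1$, which is possible unless $j-q$ is the lone member of its class in $\{1,\dots,n-q\}$; a short count with $p<q\le\lfloor n/2\rfloor$ (so $n\ge 2p+2$ and $n-q\ge p+1$) rules that out, since it would force $j\ge n-p+1$ and $j\le\lceil n/2\rceil$ simultaneously, hence $p\ge\lfloor n/2\rfloor+1$, which is false. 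So apart from the two WLOG reductions no case analysis on $j$ is needed, only this counting.
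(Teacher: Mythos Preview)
Your argument is correct and genuinely different from the paper's. The paper's proof is not self-contained: it reduces to the case where $w'$ is a square, splits $w$ into two halves $w_1,w_2$ with $w_1=_jw_2$, observes that their length-$p$ string periods $s_1,s_2$ are cyclic shifts of one another differing at a single position, and then invokes Observation~\ref{obs:cyclic_shift_seed} and the paper's key Lemma~\ref{lem:main} (a seed cannot be at Hamming distance~$1$ from the string it seeds) to derive a contradiction. So the paper actually proves this ``folklore'' fact via its main seed lemma.

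Your route is an elementary, Fine-and-Wilf-flavoured propagation over residue classes modulo $p$, using the period $q$ of $w'$ to transport letters of $w$ between classes while avoiding the single mismatch position; no seed machinery is needed. The delicate part---finding legal witnesses $i\in\{1,\dots,n-q\}$ with $i\neq j$ and $i+q\neq j$---is handled cleanly: the residue constraint already forces $i\neq j$ in every step and $i+q\neq j$ in every step but the first, and your count from $p<q\le\lfloor n/2\rfloor$ and $j\le\lceil n/2\rceil$ disposes of the remaining obstruction $i=j-q$ at step~$1$. What the paper's approach buys is economy inside the paper (one lemma powers everything); what yours buys is an independent, period-only proof that matches the ``folklore'' label and does not rely on the quasiperiodicity framework.
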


  We say that a string $c$ \emph{covers} a string $w$ ($|w|=n$) if for every position $k \in \{1,\ldots,n\}$
  there exists a factor $w[i..j]=c$ such that $i \le k \le j$.
  Then $c$ is called a \emph{cover} of $w$; see Fig.~\ref{fig:cover}.
  A string $w$ is called \emph{quasiperiodic} if it has a cover of length smaller than $n$.

    \begin{figure}[htpb]
    \begin{center}
    \begin{tikzpicture}[xscale=0.5]
  \foreach \x/\c in {
    1/a,2/a,3/b,4/a,5/a,
    6/b,7/a,8/a,
    9/a,10/a,11/b,12/a,13/a,
    14/a,15/b,16/a,17/a
  }  \draw (\x,0) node[above] {\tt \c};
  \draw[yshift=-0.4cm] (0.7,0.9) -- (0.7,1) -- (5.3,1) -- (5.3,0.9);
  \draw[xshift=3cm,yshift=-0.1cm] (0.7,0.9) -- (0.7,1) -- (5.3,1) -- (5.3,0.9);
  \draw[xshift=8cm,yshift=-0.1cm] (0.7,0.9) -- (0.7,1) -- (5.3,1) -- (5.3,0.9);
  \draw[xshift=12cm,yshift=-0.4cm] (0.7,0.9) -- (0.7,1) -- (5.3,1) -- (5.3,0.9);
\end{tikzpicture}
    \end{center}
    \caption{
      \texttt{aabaa} is a cover of \texttt{aabaabaaaabaaabaa}
    }\label{fig:cover}
    \end{figure}

  A significant amount of work has been devoted to the computation of covers in a string.
  A linear-time algorithm finding the shortest cover of a string was proposed
  by Apostolico et al.\ \cite{DBLP:journals/ipl/ApostolicoFI91}.
  Later a linear-time algorithm computing all the covers of a string was proposed by Moore and Smyth
  \cite{DBLP:conf/soda/MooreS94}.
  Breslauer \cite{DBLP:journals/ipl/Breslauer92} gave an on-line $O(n)$-time algorithm
  computing the cover array of a string of length $n$, that is, an array specifying the lengths
  of shortest covers of all the prefixes of the string.
  Li and Smyth~\cite{DBLP:journals/algorithmica/LiS02} provided a linear-time algorithm
  for computing the array of longest covers of all the prefixes of a string.
  All these papers employ various combinatorial properties of covers.

  Our main contribution is stated as the following theorem.
  As we have already mentioned before, a periodic string has a border long enough to be the string's cover.
  Hence, a periodic string is also quasiperiodic, and Theorem~\ref{thm:main} generalizes Fact~\ref{fct:periodic}.

  \begin{theorem}\label{thm:main}
    Let $w$ and $w'$ be two strings of length $n$ and $j \in \{1,\ldots,n\}$ be an index.
    If $w =_j w'$, then at most one of the strings $w$, $w'$ is quasiperiodic.
  \end{theorem}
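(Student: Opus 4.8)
The plan is to argue by contradiction: suppose $w =_j w'$ and both are quasiperiodic, with covers $c$ of $w$ and $c'$ of $w'$. First I would record some easy reductions. If $|c| \le n/2$ then $c$ is a border of $w$ of length $>n/2$, hence a period $\le n/2$, so $w$ is periodic; symmetrically for $c'$. Fact~\ref{fct:periodic} then handles the case where at least one of the covers is ``short''. So the interesting case is when both covers are long, say $|c|, |c'| > n/2$; then each of $c$, $c'$ occurs in its string only as a prefix and as a suffix (occurrences of a length-$>n/2$ factor must overlap, and in fact there can be at most two of them, forced to the two ends). Being a cover, $c$ is simultaneously a prefix and a suffix of $w$ that together span all of $w$, i.e.\ $c$ is a border of $w$ with $2|c| \ge n$ — but that again means $w$ has period $n - |c| \le n/2$, contradiction with $w$ not being periodic. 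So actually the ``both covers long'' case is vacuous, and the only surviving regime is: exactly one cover, say $|c|$, lies in $(0, n/2]$ and the other, $|c'|$, in $(n/2, n)$ — wait, that regime is already killed the same way. Let me restate cleanly: any cover of length $\le n/2$ makes the string periodic, and any cover of length $> n/2$ is a border that also makes the string periodic. Hence \emph{every} quasiperiodic string is periodic?? That is false (Fig.~\ref{fig:cover}), so the ``length $>n/2$'' analysis is wrong: a cover of length $>n/2$ need not be both a prefix and a suffix in a way that covers the middle — a single occurrence of $c$ of length $>n/2$ already covers more than half, and two occurrences (prefix + suffix) overlapping in the middle do cover everything, so $c$ \emph{is} a border with $2|c|>n$. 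The resolution is that a cover of length exactly close to $n$ with only the prefix occurrence cannot cover the last position; so there must be at least two occurrences, forcing the border. Thus the claim ``quasiperiodic with a long cover $\Rightarrow$ periodic'' is in fact correct, and the only genuinely non-periodic quasiperiodic strings have \emph{all} their covers of length in $(n/2,n)$ yet still non-periodic — impossible by the border argument — so I have mis-analyzed. The correct statement must be: a cover can be long \emph{and} the string non-periodic, e.g.\ \texttt{aabaa} covering a length-$17$ string where $|c|=5 < n/2$; I conflated $|c|$ vs.\ $|w|$ at the top level with $|c|$ vs.\ length of the covered prefix. So the real content is about covers that are \emph{short} relative to $n$, and Fact~\ref{fct:periodic} does \emph{not} immediately apply. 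Let me reset the approach.

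The honest plan: argue by contradiction with $w =_j w'$, $c$ a shortest cover of $w$, $c'$ a shortest cover of $w'$. Since the mismatch is at a single position $j$, an occurrence of $c$ in $w$ that avoids position $j$ is also an occurrence of $c$ in $w'$, and vice versa. Every position of $w$ other than $j$ is covered by \emph{some} occurrence of $c$; if every position (including $j$) is covered by an occurrence of $c$ avoiding $j$, then $c$ covers $w'$ too — but it also covers $w$, and I would like to derive that $c$ equals its own... rather, I want two covers of the \emph{same} word at Hamming distance... Actually the cleanest route: let $u = w[1..j-1]$ be the prefix before the mismatch and $v = w[j+1..n]$ the suffix after it. An occurrence of $c$ in $w$ either lies entirely in the prefix $w[1..j]$, lies entirely in the suffix $w[j..n]$, or straddles position $j$. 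Occurrences not straddling $j$, or straddling but with the mismatch position of $c$ being the only difference, give occurrences in $w'$ as well. The key structural lemma I expect to need (``new insight into combinatorics of quasiperiodicities'') is: if $c$ covers $w$ and $c$ has some internal structure (a short border / period of its own), the covering occurrences are forced into a rigid arithmetic-progression-like pattern near the ends. I would then show that forcing the same rigidity for $c'$ covering $w'$ around position $j$ is incompatible with $w[j]\ne w'[j]$ — because both $c$ and $c'$ must have occurrences ``spanning across'' $j$ (else the covering string is a border long enough to make the word periodic, and then Fact~\ref{fct:periodic} finishes), and an occurrence of $c$ spanning $j$ together with an occurrence of $c'$ spanning $j$ pins down $w[j]$ and $w'[j]$ via the (shared) letters to the left and right, forcing $w[j]=w'[j]$.

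So the key steps, in order, are: (1) reduce to the case where neither $w$ nor $w'$ is periodic, so (by the border-period correspondence) neither $c$ nor $c'$ is a border of its word, hence each of $c$, $c'$ has at least one occurrence that is neither a prefix nor a suffix of the respective word — in particular each has an occurrence that strictly contains or crosses position $j$ in a nontrivial way, because positions near $j$ must be covered; (2) analyze the occurrence(s) of $c$ covering position $j$ and of $c'$ covering position $j$, extracting from the left-context $w[1..j-1]=w'[1..j-1]$ and right-context $w[j+1..n]=w'[j+1..n]$ equations on $c$ and $c'$; (3) prove the combinatorial lemma that these equations, together with $c$ covering $w$ and $c'$ covering $w'$, force a common periodic structure on an interval around $j$ long enough to contradict $w[j] \ne w'[j]$, or alternatively force $c = c'$ as candidate letters at the mismatch; (4) conclude. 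I expect step (3) — the combinatorial heart, controlling how two \emph{different} cover-strings can interlock around a single mismatch — to be the main obstacle; it is where the folklore periodicity argument (which uses the clean prefix/suffix structure of periods) genuinely needs to be upgraded, presumably via a careful case analysis on the relative positions of the $c$- and $c'$-occurrences that straddle $j$ and an appeal to the Fine–Wilf / periodicity lemma on their overlaps.
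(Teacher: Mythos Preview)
Your proposal is not yet a proof, and there are two concrete gaps beyond the confused first half (which you yourself retract).

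First, Step~(1) of your ``honest plan'' is not a valid reduction. Fact~\ref{fct:periodic} only rules out the case where \emph{both} $w$ and $w'$ are periodic. If exactly one of them is periodic and the other is quasiperiodic but not periodic, you have no contradiction in hand; this case needs genuine work. (In the paper it is Case~(B), with three subcases.) So you cannot simply ``reduce to the case where neither $w$ nor $w'$ is periodic''.

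Second, Step~(3) is explicitly left open: you identify it as ``the main obstacle'' and gesture at Fine--Wilf on overlaps of cover occurrences straddling $j$, but give no actual argument. This is exactly where the content of the theorem lives, and a bare appeal to Fine--Wilf will not do it: the cover occurrences need not overlap each other at all, and there is no reason two different covers $c$, $c'$ of two different strings should have a common period on any interval.

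The idea you are missing is to pass from covers to \emph{seeds}. The paper's engine is Lemma~\ref{lem:main}: if $w =_j w'$ then $w$ is not a seed of $w'$. From this one gets Lemma~\ref{lem:same_cover_seed}: no string $c$ can be simultaneously a cover of $w$ and a seed of $w'$, because the occurrence of $c$ in $w$ that covers position $j$ yields a factor of $w'$ at Hamming distance~$1$ from $c$ that still has $c$ as a seed. This single tool replaces your Step~(3) entirely and makes the case analysis on $|c|$, $|c'|$, and the location of $j$ short and mechanical. Your instinct to look at the occurrence of $c$ spanning $j$ is correct; the missing move is to compare that occurrence not with an occurrence of $c'$, but with the \emph{same} factor of $w'$, and to exploit the seed relation rather than trying to synchronise two independent cover structures.
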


  The proof of Theorem~\ref{thm:main} is divided into three sections.
  In Section~\ref{sec:prelim} we restate several simple preliminary observations.
  Then, Section~\ref{sec:aux} contains a proof of a crucial auxiliary lemma which shows
  a combinatorial property of seeds that we use extensively in the main result.
  Finally, Section~\ref{sec:main} contains the main proof.

  \section{Preliminaries}\label{sec:prelim}
  We say that a string $s$ is
  a \emph{seed} of a string $w$ if $|s| \le |w|$ and $w$ is a factor of some string $u$ covered by $s$;
  see Fig.~\ref{fig:seed}.
  Furthermore, $s$ is called a \emph{left seed} of $w$ if $s$ is both a prefix and a seed of $w$.
  Thus a cover of $w$ is always a left seed of $w$, and a left seed of $w$ is a seed of $w$.
  The notion of seed was introduced in \cite{DBLP:journals/algorithmica/IliopoulosMP96}
  and efficient computation of seeds was further considered in \cite{DBLP:conf/cpm/ChristouCIKPRRSW11,DBLP:conf/soda/KociumakaKRRW12}.
    
    \begin{figure}[htpb]
    \begin{center}
    \begin{tikzpicture}[xscale=0.5]
  \foreach \x/\c in {
    2/a,3/b,4/a,5/a,
    6/b,7/a,8/a,
    9/a,10/a,11/b,12/a,13/a,
    14/a,15/b,16/a,17/a,
    18/a
  }  \draw (\x,0) node[above] {\tt \c};
  \draw[yshift=-0.4cm] (1.5,1) -- (5.3,1) -- (5.3,0.9);
  \draw[xshift=3cm,yshift=-0.1cm] (0.7,0.9) -- (0.7,1) -- (5.3,1) -- (5.3,0.9);
  \draw[xshift=8cm,yshift=-0.1cm] (0.7,0.9) -- (0.7,1) -- (5.3,1) -- (5.3,0.9);
  \draw[xshift=12cm,yshift=-0.4cm] (0.7,0.9) -- (0.7,1) -- (5.3,1) -- (5.3,0.9);
  \draw[xshift=16cm,yshift=-0.1cm] (0.7,0.9) -- (0.7,1) -- (2.5,1);
\end{tikzpicture}
    \end{center}
    \caption{
      \texttt{aabaa} is a seed of \texttt{abaabaaaabaaabaaa}
    }\label{fig:seed}
    \end{figure}

  In the proof of our main result we use the following easy observations that are immediate consequences
  of the definitions of cover and seed.

  \begin{observation}\label{obs}
    Consider strings $w$ and $c$.
    \begin{enumerate}[(a)]
      \item\label{obs:cover_period}
        If $c$ is a cover of $w$ and $|c| \ge |w|/2$, then $w$ is periodic with a period $|w|-|c|$.
      \item\label{obs:cover_cover}
        If $c$ is a cover of $w$, then any cover of $c$ is also a cover of $w$.
      \item\label{obs:trivial}
        If $c$ is a seed of $w$, then $c$ is a seed of every factor of $w$ of length at least $|c|$.
      \item\label{obs:extend}
        If $w$ has a period $p$ and a prefix of length at least $p$ that has a cover $c$,
        then $c$ is a left seed of $w$.
    \end{enumerate}
  \end{observation}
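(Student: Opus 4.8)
The plan is to derive all four parts directly from the definitions of cover and seed; parts~(a)--(c) are routine, while part~(d) is the one that needs a genuine combinatorial argument. For part~(a), I would first observe that any cover is a \emph{border}: the occurrence of $c$ containing position $1$ must begin at position $1$, so $c$ is a prefix of $w$, and symmetrically the occurrence containing position $n=|w|$ forces $c$ to be a suffix. By the fundamental border--period correspondence cited in the introduction, a border of length $|c|$ yields the period $|w|-|c|$. Since $|c|\ge |w|/2$, this period is at most $|w|/2$, so $w$ is periodic with period $|w|-|c|$, as claimed (here $c$ is a proper cover, so $|c|<|w|$ and the period is positive).

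For part~(b), let $d$ be a cover of $c$, fix an arbitrary position $k$ of $w$, and take an occurrence $w[i..j]=c$ with $i\le k\le j$. Inside this copy of $c$ the relative position $k-i+1$ lies in some occurrence of $d$ in $c$; translating that occurrence back by $i-1$ gives an occurrence of $d$ in $w$ containing $k$. As $k$ is arbitrary, $d$ covers $w$, so covering is transitive. For part~(c), I would simply unfold the definition of seed: $c$ being a seed of $w$ means $w$ is a factor of some string $u$ covered by $c$. If $f$ is a factor of $w$ with $|f|\ge|c|$, then $f$ is also a factor of $u$ and $|c|\le|f|$, so $f$ is a factor of a string covered by $c$; hence $c$ is a seed of $f$.

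Part~(d) is where the real work lies, and I expect it to be the main obstacle. Write $v$ for the prefix of length $\ell\ge p$ that $c$ covers; since $c$ covers $v$ it is a prefix of $v$, hence of $w$, so it only remains to show that $c$ is a seed of $w$. Let $t$ be the right-infinite string with period $p$ whose prefix of length $\ell$ equals $v$; because $w$ has period $p$, both $v$ and $w$ are prefixes of $t$. The crucial point is that occurrences of $c$ in $t$ are closed under shifting by $p$: periodicity of $t$ gives $t[i+p..i+p+|c|-1]=t[i..i+|c|-1]$, so whenever $c$ occurs at position $i$ it also occurs at position $i+p$. Taking the occurrences of $c$ that cover $v=t[1..\ell]$ and shifting them by $p$ yields a cover of $[1+p,\ell+p]$; since $p\le\ell$ this interval overlaps $[1,\ell]$, so the two families together cover $[1,\ell+p]$. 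Iterating the shift $k$ times covers $[1,\ell+kp]$, where every occurrence used lies inside $[1,\ell+kp]$. Choosing $k$ with $\ell+kp\ge n$ and setting $u=t[1..\ell+kp]$ then produces a string covered by $c$ that contains $w=t[1..n]$ as a prefix; thus $c$ is a seed of $w$, and being also a prefix of $w$, it is a left seed. The delicate steps are verifying the shift-closure of occurrences and checking that the shifted families glue into a single cover, which is precisely where the hypothesis $\ell\ge p$ is needed.
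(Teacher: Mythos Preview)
Your proposal is correct; the paper itself gives no proof of this observation, stating only that the four parts are ``immediate consequences of the definitions of cover and seed,'' and your arguments are precisely the natural unfoldings of those definitions (the border argument for (a), transitivity for (b), restriction of the ambient covered string for (c), and periodic propagation of the covering occurrences for (d)). In particular your treatment of (d)---shifting the occurrences of $c$ in the covered prefix by multiples of $p$ inside the periodic extension and using $\ell\ge p$ to glue consecutive shifted blocks---is exactly the argument the paper is implicitly relying on.
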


  A string $w'$ is called a \emph{cyclic shift} of a string $w$, both of length $n$,
  if there is a position $i \in \{1,\ldots,n\}$ such that $w'=w[i+1..n]w[1..i]$.
  We denote this relation as $w \approx w'$.
  The following obviously holds.

  \begin{observation}\label{obs:cyclic_shift_seed}
    If $w'$ is a cyclic shift of $w$, then $w$ is a seed of $w'$.
  \end{observation}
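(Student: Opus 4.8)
The plan is to exhibit an explicit witness string $u$ satisfying the definition of a seed. Recall that $w$ is a seed of $w'$ provided $|w| \le |w'|$ and $w'$ is a factor of some string $u$ that is covered by $w$. Since $|w|=|w'|=n$, the length condition holds trivially, so the entire task reduces to producing a suitable $u$ together with an occurrence of $w'$ in it and a covering of it by $w$.

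The witness I would take is $u = ww$, the concatenation of $w$ with itself, which has length $2n$. First I would check that $w'$ is a factor of $u$: writing $w' = w[i+1..n]\,w[1..i]$ according to the definition of cyclic shift, this is precisely the length-$n$ factor of $ww$ that begins at position $i+1$, namely $u[i+1..i+n]$. Hence $w'$ occurs in $u$ exactly as required by the seed condition.

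Next I would verify that $w$ covers $u$. The string $w$ occurs in $u=ww$ starting at position $1$, covering positions $1,\ldots,n$, and again starting at position $n+1$, covering positions $n+1,\ldots,2n$. Every position of $u$ therefore lies inside an occurrence of $w$, so by definition $w$ is a cover of $u$. Combining this with the previous paragraph and the definition of seed yields immediately that $w$ is a seed of $w'$.

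I do not expect any genuine obstacle here: the argument is a direct unwinding of the definitions once the witness $u=ww$ has been chosen, which is why the statement is phrased as an observation. The only point meriting a moment of care is the degenerate case $i=n$, where $w'=w$; but then $w$ is still trivially a factor of $u=ww$ and covers it, so the same witness works without modification.
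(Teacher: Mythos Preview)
Your argument is correct and is exactly the routine verification the paper has in mind: the observation is stated there without proof (``The following obviously holds''), and your choice of witness $u=ww$ together with the two adjacent occurrences of $w$ is the natural way to justify it.
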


  \section{Auxiliary Lemma}\label{sec:aux}
  In the following lemma we observe a new property of the notion of seed.
  As we will see in Section~\ref{sec:main}, this lemma encapsulates the hardness of multiple cases in the proof of the main result.

  Before we proceed to the lemma, however, let us introduce an additional notion lying in between periodicity and quasiperiodicity.
  We say that a string $w$ of length $n$ is \emph{almost periodic} with period $p$ if there exists an index
  $j \in \{1,\ldots,n-p\}$ such that:
  $$w[i] = w[i+p] \quad\mbox{for all }i=1,\ldots,n-p,\ i \ne j,\quad\mbox{and }w[j] \ne w[j+p].$$
  In this case we refer to $j$ as the \emph{mismatch position}.
  Furthermore, if $w[1..b] =_j w[n-b+1..n]$ for an integer $b$, we say that each of these factors
  is an \emph{almost border} of $w$ of length $b$ (and again refer to $j$ as the mismatch position).
  We immediately observe the following.

  \begin{observation}\label{obs:almost}
    A string $w$ of length $n$ is almost periodic with period $p$ and mismatch position $j$
    if and only if $w$ has an almost border of length $n-p$ with mismatch position $j$.
  \end{observation}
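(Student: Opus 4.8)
The plan is to unfold the two definitions and check that they coincide symbol for symbol, exactly mirroring the classical equivalence ``$p$ is a period of a string of length $n$ iff it has a border of length $n-p$'' recalled in the introduction. First I would set $b = n-p$. In the factor $w[1..b]$ the $i$-th letter is $w[i]$, whereas in the factor $w[n-b+1..n]$ the $i$-th letter is $w[n-b+i] = w[i+p]$; hence position $i$ of the prefix is aligned with position $i+p$ of $w$, and in both notions the distinguished index $j$ ranges over $\{1,\dots,b\} = \{1,\dots,n-p\}$.

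For the forward implication I would assume $w$ is almost periodic with period $p$ and mismatch position $j$, that is, $w[i]=w[i+p]$ for every $i \in \{1,\dots,n-p\}\setminus\{j\}$ and $w[j] \ne w[j+p]$. Substituting $w[i+p] = w[n-b+i]$ turns this into the assertion that $w[1..b]$ and $w[n-b+1..n]$ agree at every position other than $j$ and differ at $j$, i.e.\ $w[1..b] =_j w[n-b+1..n]$, which is by definition an almost border of $w$ of length $b=n-p$ with mismatch position $j$. Reading the same identities in the opposite order gives the converse, so the equivalence follows.

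I do not expect any real obstacle here: the statement is a purely notational reformulation — the analogue for almost periodicity of the standard period/border correspondence — whose sole purpose is to let the later arguments move freely between ``period with one mismatch'' and ``border with one mismatch''. The only things worth stating explicitly are that the index constraint $j \le n-p$ is identical on the two sides and that it is the single mismatch (not merely a single agreement) that is carried across by the translation.
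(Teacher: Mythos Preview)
Your argument is correct and is exactly the intended one: the paper does not even write out a proof, treating the observation as an immediate consequence of the definitions, and your line-by-line unfolding via $b=n-p$ and the alignment $w[n-b+i]=w[i+p]$ is precisely the verification behind that ``immediate''.
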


  \begin{example}
    The following string of length 19:
    \begin{center}
      \texttt{abaab\,ab\underline{a}ab\,ab\underline{b}ab\,abba}
    \end{center}
    is almost periodic with period $p=5$ and mismatch position $j=8$ (the letters at positions $j$ and $j+p$ are underlined).
    Hence, it has an almost border of length 14:
    \begin{center}
      \texttt{abaab\,ab\underline{a}ab\,abba}\ $=_8$\ \texttt{abaab\,ab\underline{b}ab\,abba}.
    \end{center}
  \end{example}

  \begin{lemma}\label{lem:main}
    Let $w$ and $w'$ be two strings of length $n$ and $j \in \{1,\ldots,n\}$ be an index.
    If $w =_j w'$, then $w$ is not a seed of $w'$.
  \end{lemma}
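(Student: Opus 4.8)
The plan is to assume, for contradiction, that $w$ is a seed of $w'$, and to deduce that then $w$ and $w'$ must differ in either $0$ or at least $2$ positions, never in exactly one, which contradicts $w =_j w'$. The first step is to translate the seed relation into a concrete combinatorial statement. If $w$ is a seed of $w'$, fix a string $u$ that is covered by $w$ and has $w'$ as a factor. Because $|w| = |w'| = n$, the covering condition forces $w$ to be both a prefix and a suffix of $u$, and since $w \ne w'$ we get $|u| > n$ with the occurrence of $w'$ lying strictly between the prefix- and suffix-occurrences of $w$. Now consider the occurrences of $w$ in $u$ that meet this occurrence of $w'$: none of them can lie entirely inside it (that would force $w' = w$), so each sticks out to the left, covering a prefix of $w'$ that is a suffix of $w$, or to the right, covering a suffix of $w'$ that is a prefix of $w$. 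Taking the longest prefix $w'[1..\alpha]$ and the longest suffix $w'[n-\beta+1..n]$ obtained this way, and using that the two pieces jointly cover all of $w'$, I obtain indices $\alpha,\beta$ with $1 \le \alpha,\beta \le n-1$ and $\alpha + \beta \ge n$ such that $w'[1..\alpha] = w[n-\alpha+1..n]$ and $w'[n-\beta+1..n] = w[1..\beta]$. (In the terminology introduced before the lemma, this says $w$ has a border or almost-border of length $\alpha$ and one of length $\beta$, with $\alpha + \beta \ge n$.)

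Next I would extract a common period. Set $a = n-\alpha$, $b = n-\beta$ and $p = a+b \le n$, note $1 \le a \le p-1$, and put $s = w[1..p]$. On the overlap of the two relations --- the indices $b+1,\dots,n-a$ --- they combine to give $w[k] = w[k+p]$ for $k = 1,\dots,n-p$, so $w$ has period $p$, i.e.\ $w[i] = s[((i-1)\bmod p)+1]$ for every $i$ (this is vacuous when $p=n$). The first relation gives $w'[i] = w[i+a]$ for $i \le n-a$, and for the remaining indices the second relation, together with $a+b = p \equiv 0 \pmod p$, yields the same value; hence $w'[i] = s[((i+a-1)\bmod p)+1]$ for every $i$. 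In other words, $w$ and $w'$ are the two length-$n$ windows into the $p$-periodic string $s^{\infty}$, at offsets $0$ and $a$.

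It remains to count mismatches. From the two descriptions, whether $w[i] \ne w'[i]$ depends only on $r := (i-1)\bmod p$, namely on whether $s[r+1] \ne s[((r+a)\bmod p)+1]$; let $M \subseteq \mathbb{Z}_p$ be the set of such residues. Each $r \in M$ is realised by the index $i = r+1 \le p \le n$, so $w$ and $w'$ differ in at least $|M|$ positions. If $M = \emptyset$ then $w = w'$, which is impossible. Otherwise I claim $|M| \ge 2$: since $1 \le a \le p-1$, the ``shift by $a$'' permutation of $\mathbb{Z}_p$ decomposes into orbits all of size $p/\gcd(a,p) \ge 2$, and on each orbit the residues in $M$ are exactly the ``change points'' of the cyclic word obtained by reading the letters of $s$ along the orbit; a nonconstant cyclic word of length at least $2$ has at least two change points, so $|M| \ge 2$. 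Hence $w$ and $w'$ differ in at least two positions, contradicting $w =_j w'$, and the lemma follows.

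The step I expect to be most delicate is the first one --- passing from an arbitrary covering $u$ (which may contain many occurrences of $w$ around the occurrence of $w'$) to the clean statement about $\alpha$ and $\beta$, in particular checking that the ``longest left/right overhang'' occurrences exist and together cover all of $w'$. Once that is in place, the rest is a short index computation plus the elementary observation about cyclic words. I also note that this argument handles uniformly the special case $\alpha+\beta = n$, in which $w'$ is literally a cyclic rotation of $w$, so no separate case distinction is needed.
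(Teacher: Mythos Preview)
Your argument is correct, and it is genuinely different from the paper's. Both proofs start the same way: from the seed hypothesis one extracts overlaps $\alpha,\beta\in\{1,\dots,n-1\}$ with $\alpha+\beta\ge n$ such that a prefix of $w'$ of length $\alpha$ is a suffix of $w$ and a suffix of $w'$ of length $\beta$ is a prefix of $w$ (the paper's $\alpha,\beta$ are your $\beta,\alpha$). From there the paper assumes its $\alpha\ge n/2$, sets $p=n-\alpha$, and does a three-way case split on whether $j\le p$, $p<j\le\alpha$, or $j>\alpha$; each case uses the ad~hoc notions of \emph{almost periodic} string and \emph{almost border} together with cyclic-shift relations among length-$p$ windows to conclude that no suffix of $w$ of length $\ge p$ can be a prefix of $w'$, contradicting $\beta\ge p$. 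You instead combine the two overlap relations on their common range to get that $w$ has period $p=a+b$ (with $a=n-\alpha$, $b=n-\beta$), and then observe that both $w$ and $w'$ are length-$n$ windows into the same $p$-periodic word $s^\infty$ at offsets $0$ and $a$; the mismatch set is then governed by residues modulo $p$, and the cyclic-word/orbit argument shows it has size $0$ or $\ge 2$. This is cleaner: it is case-free, it never mentions $j$, it treats the rotation case $\alpha+\beta=n$ uniformly, and it actually proves the slightly stronger statement that $w$ and $w'$ cannot be at Hamming distance exactly $1$. What the paper's route buys is that it stays closer to the string-combinatorics vocabulary used elsewhere (almost periodicity, cyclic shifts of the length-$p$ factor), which may feel more concrete to some readers; your route buys brevity and a structural explanation of why mismatches must come in pairs.

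One cosmetic point: in your first paragraph you describe $\alpha$ once as ``obtained this way'' (the maximal overlap realised by an occurrence in $u$) and implicitly as the longest prefix of $w'$ that is a suffix of $w$. These need not coincide, but your argument only uses the existence of \emph{some} $\alpha,\beta$ with the stated equalities and $\alpha+\beta\ge n$, never maximality, so this is harmless; you might simply drop the word ``longest'' and take $\alpha,\beta$ to be the overlap lengths of the rightmost left-overhanging and leftmost right-overhanging occurrences.
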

  \begin{proof}
    Assume to the contrary that $w$ is a seed of $w'$.
    Let $u$ be a string covered by $w$ that has $w'$ as a factor.
    Obviously, it suffices to consider two occurrences of $w$ in $u$ to cover all positions of the factor $w'$:
    the leftmost one that covers $w'[n]$
    and
    the rightmost one that covers $w'[1]$.
    Let $\alpha$ be the length of the longest suffix of $w'$ that is a prefix of $w$, and
    let $\beta$ be the length of the longest prefix of $w'$ that is a suffix of $w$
    (these are the so-called longest overlaps between $w'$ and $w$, and between $w$ and $w'$).
    Thus we have $\alpha,\beta>0$ and $\alpha+\beta \ge n$; see Fig.~\ref{fig:pref_suf}.
    From now on we assume that $\alpha \ge n/2$.
    The other case (i.e., $\beta \ge n/2$) is symmetric by reversing the strings $w$ and $w'$.
    Let us denote $p=n-\alpha$.

    \begin{figure}[htpb]
    \begin{center}
    \begin{tikzpicture}
  \draw (0,0) -- (10,0);
  \draw (0.3,0) node[below] {$u$};
  \draw[very thick,xshift=1cm] (2,0) -- (6,0)  (2,-0.1) -- (2,0.1)  (6,-0.1) -- (6,0.1)  (4,0) node[below] {$w'$};
  \draw[yshift=0.7cm,xshift=-1cm,very thick] (2,0) -- (6,0)  (2,-0.1) -- (2,0.1)  (6,-0.1) -- (6,0.1)  (4,0) node[below] {$w$};
  \draw[yshift=0.7cm,xshift=-1cm,latex-latex] (4,0.3) -- node[above] {$\beta$} (6,0.3);
  \draw[yshift=1.8cm,xshift=2cm,very thick] (2,0) -- (6,0)  (2,-0.1) -- (2,0.1)  (6,-0.1) -- (6,0.1)  (4,0) node[below] {$w$};
  \draw[yshift=1.8cm,xshift=2cm,latex-latex] (2,0.3) -- node[above] {$\alpha$} (5,0.3);
\end{tikzpicture}
    \end{center}
    \caption{
      $w$ is a seed of $w'$; $\alpha$ and $\beta$ are the longest overlaps between the two strings.
    }\label{fig:pref_suf}
    \end{figure}
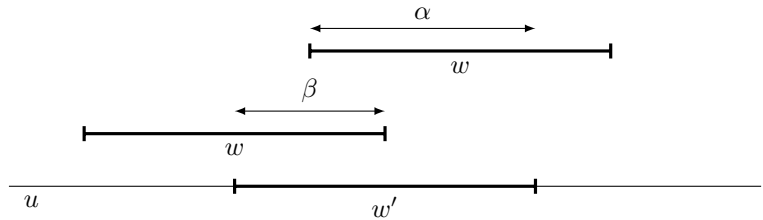

    First consider the case when $j$ satisfies $p < j \le \alpha$.
    Then we have:
    $$w'[1..\alpha] =_j w[1..\alpha] \quad\mbox{and}\quad w[1..\alpha] = w'[p+1..n]$$
    by the definitions of $w'$ and $\alpha$, respectively.
    Consequently, $w'[1..\alpha] =_j w'[p+1..n]$.
    This means that $w'$ has an almost border of length $\alpha$ with mismatch position $j$.
    By Observation~\ref{obs:almost}, $w'$ is almost periodic with period $p$ and the same mismatch position.

    The latter can be written equivalently as follows:
    \begin{equation}\label{eq:w1}
      w'[1..p] \approx w'[2..p+1] \approx \ldots \approx w'[j..j+p-1] \not\approx w'[j+1..j+p] \approx \ldots \approx w'[\alpha+1..n].
    \end{equation}
    Recall that $w =_j w'$.
    This means that the same cyclic-shift relations hold for all corresponding factors of $w$ that do not contain
    the symbol $w[j]$.
    Moreover, $w[1..\alpha] = w'[p+1..n]$, so $w[j] = w'[j+p] = w[j+p]$ (and $w[j] \ne w[j-p]$).
    This concludes that:
    \begin{equation}\label{eq:w}
      w[1..p] \approx w[2..p+1] \approx \ldots \approx w[j-p..j-1] \not\approx w[j-p+1..j] \approx \ldots \approx w[\alpha+1..n].
    \end{equation}
    Moreover, the inequalities satisfied by $j$ in this case imply that $w'[1..p] = w[1..p]$ and $w'[\alpha+1..n] = w[\alpha+1..n]$.
    Hence, the conditions (1) and (2) conclude that there is no suffix of $w$ of length at least $p$ that would be
    a prefix of $w'$.
    Consequently, $\beta<p$, a contradiction.
    
    We are left with two cases:
    \begin{enumerate}[(A)]
      \item $j \le p$.
        In this case $w$ has a border of length $\alpha$:
        $$w[1..\alpha] = w'[p+1..n] = w[p+1..n].$$
        Consequently, $w$ is periodic with period $p$.
        On the other hand, $w'$ does not have the period $p$, since $w'[j+p] = w[j+p] = w[j] \ne w'[j]$.
        Moreover, $w'[1..p] \not \approx w[1..p] \approx w[i..i+p-1]$ for all $i$.
        In conclusion, there cannot exist a suffix of $w$ of length at least $p$ that would be a prefix of $w'$,
        i.e.\ $\beta<p$, a contradiction.

      \item $j>\alpha$.
        In this case $w'$ has a border of length $\alpha$:
        $$w'[1..\alpha] = w[1..\alpha] = w'[p+1..n].$$
        Consequently, $w'$ is periodic with period $p$.
        On the other hand, $w$ does not have the period $p$, since $w[j-p] = w'[j-p] = w'[j] \ne w[j]$.
        Moreover,
        $$w[\alpha+1..n] \not \approx w'[\alpha+1..n] \approx w'[i..i+p-1]$$
        for all $i$.
        In conclusion, there cannot exist a suffix of $w$ of length at least $p$ that would be a prefix of $w'$,
        i.e.\ $\beta<p$, a contradiction.\qedhere
    \end{enumerate}
  \end{proof}

  The following example illustrates the main case of the proof of the above lemma.

  \begin{example}
    Consider the following two strings of length 19:
    \begin{center}
      $w=\,$\texttt{abaab\,ab\underline{b}ab\,abbab\,abba},\quad $w'=\,$\texttt{abaab\,ab\underline{a}ab\,abbab\,abba}.
    \end{center}
    We have $w=_8w'$.
    The longest suffix of $w'$ that is a prefix of $w$ has length $14$ (\texttt{abaab\,abbab\,abba}).
    Hence, $w'$ is almost periodic with period $19-14=5$ and mismatch position 8.
    Moreover, $w$ is almost periodic with the same period and mismatch position $8-5=3$.
    We see that no prefix of $w'$ of length at least 5 can be a suffix of $w$.
  \end{example}

  We use Lemma~\ref{lem:main} as our key tool throughout the proof of the main result.
  As a consequence of Lemma~\ref{lem:main} we obtain the following lemma that will be also useful in the main proof.

  \begin{lemma}\label{lem:same_cover_seed}
    Let $w$ and $w'$ be two strings of length $n$ and $j \in \{1,\ldots,n\}$ be an index.
    If $w =_j w'$, then there does not exist a string $c$ that would be both a cover of $w$
    and a seed of $w'$.
  \end{lemma}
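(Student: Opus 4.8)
The plan is to assume, for contradiction, that some string $c$ is at the same time a cover of $w$ and a seed of $w'$, and to reduce this situation to Lemma~\ref{lem:main} applied to a single occurrence of $c$. Write $k=|c|$. First I would use that $c$ covers $w$ to find an occurrence of $c$ that straddles the mismatch position $j$: applying the definition of a cover to position $j$ yields an index $i$ with $w[i..i+k-1]=c$ and $i\le j\le i+k-1$. Put $c'=w'[i..i+k-1]$. Since $w=_j w'$ and $j$ lies in the window $\{i,\ldots,i+k-1\}$, the strings $c$ and $c'$ both have length $k$ and differ at exactly one position, namely $j-i+1$; that is, $c=_{j-i+1}c'$.

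Next I would derive a contradiction from the seed hypothesis. By Observation~\ref{obs}\ref{obs:trivial}, since $c$ is a seed of $w'$ it is a seed of every factor of $w'$ of length at least $k$, and in particular of $c'=w'[i..i+k-1]$. But $c=_{j-i+1}c'$, so Lemma~\ref{lem:main}, applied to the length-$k$ strings $c$ and $c'$ with mismatch index $j-i+1$, tells us that $c$ is \emph{not} a seed of $c'$. This contradiction proves the lemma.

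The argument is short because the combinatorial core has already been isolated in Lemma~\ref{lem:main}; the only points that need a bit of care are routine. One must check that the occurrence of $c$ produced in the first step really contains position $j$ --- this is exactly the definition of a cover specialized to position $j$ --- and that the degenerate case $k=n$ is handled the same way, since then $c=w$, $c'=w'$, and Lemma~\ref{lem:main} applies directly. I therefore do not expect a genuine obstacle in writing out the details.
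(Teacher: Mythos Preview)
Your proposal is correct and follows essentially the same approach as the paper's proof: pick an occurrence of $c$ in $w$ that covers position $j$, let $c'$ be the corresponding factor of $w'$, use Observation~\ref{obs}\ref{obs:trivial} to conclude that $c$ is a seed of $c'$, and contradict Lemma~\ref{lem:main} via $c =_{j-i+1} c'$. Your extra remarks about the cover definition and the degenerate case $k=n$ are fine but not strictly needed, as the general argument already covers them.
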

  \begin{proof}
    Consider an occurrence of $c$ in $w$, $w[i..i+|c|-1]$, that covers the position $j$.
    Due to Observation~\ref{obs}\ref{obs:trivial}, the string $c'=w'[i..i+|c|-1]$ has $c$ as a seed.
    We have $c' =_{j-i+1} c$, which contradicts Lemma~\ref{lem:main}.
  \end{proof}

  \section{Main Result}\label{sec:main}
  In this section we first present a proof of the folklore property of string periodicity
  (Fact~\ref{fct:periodic}) for completeness, and then proceed to the proof of our main result being a generalization
  of that fact (Theorem~\ref{thm:main}).

  \begin{proof}[Proof (of Fact~\ref{fct:periodic})]
    Assume to the contrary that $w =_j w'$ and both strings are periodic.
    Let $p$ and $p'$ ($p,p' \le \frac{n}{2}$) be the shortest periods of $w$ and $w'$.
    Assume w.l.o.g.\ that $p \le p'$.
    It suffices to prove the lemma in the case that $w'$ is a square of length $2p'$ and $j \le p'$.
    Let us define $w_1=w[1..p']$ and $w_2=w[p'+1..2p']$.
    By the periodicity of $w'$, we see that $w_1 =_j w_2$.

    We may assume that $j \le \frac{p'}{2}$, as otherwise we may reverse both strings $w_1$, $w_2$.
    Both $w_1$ and $w_2$ have period $p$, as they are factors of $w$ (or the reversal of $w$), and their string periods of length $p$,
    further denoted by $s_1$ and $s_2$, are cyclic shifts.
    Now consider any $k$ such that $kp+1 \le j \le (k+1)p$ (it exists by the upper bound on the value of $j$).
    Then $w_1[kp+1..(k+1)p]=s_1$ and $w_2[kp+1..(k+1)p]=s_2$.
    This concludes that $s_1$ and $s_2$ differ at exactly one position $l=j-kp$, i.e., $s_2=_l s_1$.
    However, $s_1$ is a cyclic shift of $s_2$, hence a seed of $s_2$ by Observation~\ref{obs:cyclic_shift_seed}.
    This contradicts Lemma~\ref{lem:main}.
  \end{proof}

  \begin{proof}[Proof (of Theorem~\ref{thm:main})]
    Assume to the contrary that $w =_j w'$ and both strings are quasiperiodic.
    Let $c$ and $c'$ be the shortest covers of $w$ and $w'$.
    W.l.o.g.\ we can assume that $|c| \le |c'|$.
    We consider a few cases depending on the lengths of the covers:
    \begin{enumerate}[(A)]
      \item $\frac{n}{2} \le |c|$.
        By Observation~\ref{obs}\ref{obs:cover_period}, the strings $w$, $w'$ are both periodic.
        This contradicts Fact~\ref{fct:periodic}.
      \item $|c| < \frac{n}{2} \le |c'|$.
        Again by Observation~\ref{obs}\ref{obs:cover_period}, $w'$ is periodic with the period $p'=n-|c'| \le \frac{n}{2}$.
        Assume w.l.o.g.\ that $j > \frac{n}{2}$.
        This means that the first half of $w$ and $w'$, $w[1..\lfloor n/2\rfloor]$, has period $p'$ and $c$ is its left seed.
        There are three subcases:
        \begin{enumerate}[(B1)]
          \item $p' \le |c|$.
          By Observation~\ref{obs}\ref{obs:extend}, $c$ is a left seed of $w'$.
          Therefore, $w$ and $w'$ contradict Lemma~\ref{lem:same_cover_seed}.
          \item $|c| < p'$ and $p' < j \le p'+|c|$.
          In this case the strings $s=w[p'+1..p'+|c|]$ and $c=w'[p'+1..p'+|c|]=w'[1..|c|]$ differ only at position $j-p'$.
          By Observation~\ref{obs}\ref{obs:trivial} applied to $w$, $s$ has a seed $c$.
          This contradicts Lemma~\ref{lem:main}.
          \item $|c| < p'$ and $j > p'+|c|$.
          Then $c$ is a cover of $w'[1..p'+|c|]$, as it is a left seed of
          $w[1..p'+|c|]=w'[1..p'+|c|]$ and a suffix due to the period $p'$.
          Hence, by Observation~\ref{obs}\ref{obs:extend}, $c$ is a left seed of $w'$.
          Therefore, $w$ and $w'$ contradict Lemma~\ref{lem:same_cover_seed}.
        \end{enumerate}
        From now on we assume that $|c|,|c'| < \frac{n}{2}$.
      \item $c=c'$.
        This immediately contradicts Lemma~\ref{lem:same_cover_seed}.
      \item $|c|=|c'|$ but $c \ne c'$.
        Let $m=|c|$; $m \le n/2$.
        Then $c$ is a border of $w$, and $c'$ is a border of $w'$.
        As $c \ne c'$, it is not possible to change a single position in $w$ such that
        both its prefix and its suffix of length $m$ become $c'$.
      \item $|c|<|c'|$.
        We consider three final subcases.
        \begin{enumerate}[(E1)]
          \item $|c| < j < n-|c|+1$.
            This means that $c$ is a border of $w'$, consequently a border of $c'$.
            However, $c$ is not a cover of $c'$.
            Otherwise, by Observation~\ref{obs}\ref{obs:cover_cover}, $c$ would be a cover of $w'$ shorter than $c'$.

            Consider the factors $w[1..|c'|]$ and $w[n-|c'|+1..n]$;
            note that they cover disjoint sets of positions.

            If $j \le |c'|$, then $w[n-|c'|+1..n]=w'[n-|c'|+1..n]=c'$.
            The string $c$ is a border of $c'=w[n-|c'|+1..n]$ and a cover of $w$.
            Hence, by Observation~\ref{obs}\ref{obs:trivial}, $c$ is a cover of $c'$.
            This contradicts the opposite observation that we have just made.
            Otherwise (if $j > |c'|$) we see that similarly $c$ is
            a cover of $w[1..|c'|] = c'$, again a contradiction.
          \item $j \ge n-|c|+1$.
            This case is symmetric to the following case (D3)
            by reversing the strings $w$ and $w'$.
          \item $j \le |c|$.
            As $c'$ is a prefix of $w'$, this means that the prefix of $c'$
            of length $|c|$ is a string $c_1$ such that $c_1 =_j c$.
            
            Note that $w[n-|c'|+1..n] = w'[n-|c'|+1..n] = c'$.
            The string $c$ is a cover of $w$, therefore, by Observation~\ref{obs}\ref{obs:trivial},
            $c$ is a seed of the prefix $c_1$ of $w[n-|c'|+1..n]$.
            This, however, contradicts Lemma~\ref{lem:main}.
        \end{enumerate}
    \end{enumerate}
    The above cases include all the possibilities.
    This concludes the proof.
  \end{proof}
  
  \section{Conclusions}
  In this note we have proved that every two distinct quasiperiodic strings of the same length differ at more than one position.
  This bound is tight, as, for instance, for every even $n \ge 2$ the strings $\mathtt{a}^{n/2-1}\mathtt{b}\mathtt{a}^{n/2-1}\mathtt{b}$ and $\mathtt{a}^n$
  are both quasiperiodic and differ at exactly two positions.

  \paragraph{Acknowledgements}
  The authors thank Maxime Crochemore and Solon P.\ Pissis for helpful discussions.


  \bibliographystyle{plain}
  \bibliography{cover-1-symbol}

\begin{thebibliography}{1}

\bibitem{DBLP:journals/ipl/ApostolicoFI91}
Alberto Apostolico, Martin Farach, and Costas~S. Iliopoulos.
\newblock Optimal superprimitivity testing for strings.
\newblock {\em Inf. Process. Lett.}, 39(1):17--20, 1991.

\bibitem{DBLP:journals/ipl/Breslauer92}
Dany Breslauer.
\newblock An on-line string superprimitivity test.
\newblock {\em Inf. Process. Lett.}, 44(6):345--347, 1992.

\bibitem{DBLP:conf/cpm/ChristouCIKPRRSW11}
Michalis Christou, Maxime Crochemore, Costas~S. Iliopoulos, Marcin Kubica,
  Solon~P. Pissis, Jakub Radoszewski, Wojciech Rytter, Bartosz Szreder, and
  Tomasz Wale\'n.
\newblock Efficient seeds computation revisited.
\newblock In Raffaele Giancarlo and Giovanni Manzini, editors, {\em
  Combinatorial Pattern Matching - 22nd Annual Symposium, {CPM} 2011.
  Proceedings}, volume 6661 of {\em Lecture Notes in Computer Science}, pages
  350--363. Springer, 2011.

\bibitem{AlgorithmsOnStrings}
Maxime Crochemore, Christophe Hancart, and Thierry Lecroq.
\newblock {\em Algorithms on Strings}.
\newblock Cambridge University Press, New York, NY, USA, 2007.

\bibitem{DBLP:journals/algorithmica/IliopoulosMP96}
Costas~S. Iliopoulos, D.~W.~G. Moore, and Kunsoo Park.
\newblock Covering a string.
\newblock {\em Algorithmica}, 16(3):288--297, 1996.

\bibitem{DBLP:conf/soda/KociumakaKRRW12}
Tomasz Kociumaka, Marcin Kubica, Jakub Radoszewski, Wojciech Rytter, and Tomasz
  Wale\'n.
\newblock A linear time algorithm for seeds computation.
\newblock In Yuval Rabani, editor, {\em Proceedings of the Twenty-Third Annual
  {ACM-SIAM} Symposium on Discrete Algorithms, {SODA} 2012}, pages 1095--1112.
  {SIAM}, 2012.

\bibitem{DBLP:journals/algorithmica/LiS02}
Yin Li and William~F. Smyth.
\newblock Computing the cover array in linear time.
\newblock {\em Algorithmica}, 32(1):95--106, 2002.

\bibitem{Lothaire}
M.~Lothaire.
\newblock {\em Combinatorics on Words}.
\newblock Addison-Wesley, Reading, MA., U.S.A., 1983.

\bibitem{DBLP:conf/soda/MooreS94}
Dennis Moore and William~F. Smyth.
\newblock Computing the covers of a string in linear time.
\newblock In {\em SODA}, pages 511--515, 1994.

\end{thebibliography}

\end{document}